\newcommand{\ID}{ID}
\title{Silent MST approximation for tiny memory}
\author{Lélia Blin}%
{LIP6, Sorbonnes University}%
{lelia.blin@lip6.fr}%
{http://orcid.org/0000-0003-0342-9243}%
{}
\author{Swan Dubois}%
{LIP6, Sorbonnes University}%
{swan.dubois@lip6.fr}%
{}%
{}
\author{Laurent Feuilloley}%
{DII, Universidad de Chile}%
{feuilloley@dii.uchile.fr}%
{http://orcid.org/0000-0002-3994-0898}%
{}
\authorrunning{L. Blin, S. Dubois and L. Feuilloley}
\keywords{Silent self-stabilization, minimum spanning tree, approximation, memory-approximation trade-off, proof-labeling scheme, polynomial-time stabilization }
\begin{document}
\maketitle


\begin{abstract}
In this paper we show that approximation can help reduce the space used for self-stabilization. In the classic \emph{state model}, where the nodes of a network communicate by reading the states of their neighbors, an important measure of efficiency is the space: the number of bits used at each node to encode the state. In this model, a classic requirement is that the algorithm has to be \emph{silent}, that is, after stabilization the states should not change anymore. We design a silent self-stabilizing algorithm for the problem of minimum spanning tree, that has a trade-off between the quality of the solution and the space needed to compute it. 
\end{abstract}


\setcounter{page}{1}
\section{Introduction}

\subsection{Our questions}
\paragraph*{Context.}

Self-stabilization is a technique to ensure fault-tolerance in distributed systems. 
It aims at designing systems that can recover from arbitrary faults. 
Silent self-stabilization consists in asking for the additional property that, once a correct configuration has been reached, the processors basically stop computing.  

In the context of self-stabilization, the most studied measure of performance is the time to stabilize to a correct configuration. 
Another essential parameter is the space used by each processor. 
This parameter not only captures some notion of memory (and is actually also called \emph{the memory}), but more remarkably, it captures the performance in terms of communication, as self-stabilizing algorithms communicate by reading the states of their neighbors (when they are described in the so-called \emph{state model}, which is the most common model). 

For silent self-stabilizing algorithms, this memory usage is tightly related to the space needed to locally certify that a configuration is correct. 
Such certifications, also called proofs, have been studied independently under the name of \emph{proof-labeling schemes}. 
On the one hand, it is known that the space needed for the proof is a lower bound on the space required for silent stabilization. Indeed, after stabilization, a silent algorithm is only checking that the configuration is correct via reading its neighbors' states, which is exactly what a distributed proof is made for.
On the other hand, it is proved in~\cite{BlinFP14} that one can always design an algorithm matching this lower bound (up to an additive logarithmic factor), even in the most asynchronous setting. 
Thus in some sense, one can always achieve optimal space. 

There are two issues to this situation. First  the general technique of \cite{BlinFP14} is inherently exponential in time: it basically consists in looking for the distributed proof via an exhaustive search. Second, the space required for silent stabilization can be simply be too large for applications.

\paragraph*{Approximation-memory trade-off} 

The core of our paper is to give a solution for the second problem, which can be rephrased as: what can be done when we do not even have the space needed for a distributed proof? 
One technique is to consider non-silent algorithms, that keep changing their states. For example, \cite{BlinT18} achieves $O(\log \log n)$ space for leader election on a ring, when the lower bound for silent stabilization is $\Omega(\log n)$ (where $n$ is the number of nodes in the network).
In this paper, we make the choice of keeping the silence property, but to be less demanding on the quality of the solution.
More precisely we are aiming at a trade-off between the memory used and the quality of the solution produced, that is, we want to design approximation algorithms for optimization problems, such that the larger the memory allowed, the better the approximation ratio.
To our knowledge this is the first time approximation is used to reduce memory usage for self-stabilization (although it has recently been proved fruitful in the more restricted context of proof-labeling schemes~\cite{Censor-HillelPP20,  EmekG20}).

\paragraph*{Optimal space in polynomial time}

Now a second question, which follows from the exponential-time algorithm of \cite{BlinFP14}, is: when we can afford the optimal space to compute an exact solution, can we get it in  polynomial time?
The answer is no in general. Consider for example the task of 3-coloring a 3-colorable graph. 
The distributed proof uses only constant space, because the colors are enough for local checkability.
On the other hand, it is known that no algorithm can compute a 3-coloring in constant space. Indeed,  in order to perform even a minimal symmetry breaking (such as having two nodes with two different outputs), an algorithm needs strictly more than constant space \cite{BeauquierGJ99} (actually $\Omega(\log \log n)$ bits are necessary \cite{BlinFB19}). 
On the positive side, \cite{BlinF15} shows that for various tree construction problems, one can match the optimal space bound and have polynomial-time stabilization. 
In particular, one can get down to $\Theta(\log^2\!n)$ bits for minimum spanning tree, which is optimal when the edge weights are in a polynomial range. As we will see, we can improve on this, as a side result of our approximation algorithm.

\subsection{Our results}

In this paper, we focus on the central problem of minimum spanning tree (MST).
Our main result is an approximation-memory trade-off for this problem. 
The theorem below, and all the results of this paper hold under the classic assumption that the edge weights are in $[1,n]$, where $n$ is the size of the network.\footnote{Assuming the maximum to be $n$ and not poly(n) allows to have cleaner proofs without additional constants, but the asymptotic results are also correct for polynomial weights.} 

\begin{theorem}\label{thm:approximation}
There exists a silent self-stabilizing approximation algorithm for minimum spanning tree, that stabilizes in polynomial time and has a trade-off between memory and approximation. 
This trade-off goes from space $O(\log^2\!n)$ for a minimum spanning tree to space $O(\log n)$ for a simple spanning tree.
\end{theorem}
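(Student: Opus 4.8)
The plan is to reduce the problem to the known exact construction by \emph{coarsening the edge weights}, and to show that the loss of precision translates into both a controlled loss of approximation quality and a controlled saving of memory. Fix a base $b > 1$ and round every weight up to the nearest power of $b$, setting $\hat w(e) = b^{\lceil \log_b w(e) \rceil}$, so that $w(e) \le \hat w(e) < b\, w(e)$ for every edge $e$. Since $w(e) \in [1,n]$, the rounded weights take only $k = O(\log_b n)$ distinct values; equivalently, choosing $b = n^{1/k}$ fixes the number of weight classes to $k$. I would first establish the approximation guarantee, which is the easy and self-contained part: if $T$ is a minimum spanning tree for $\hat w$ and $T^\star$ is a minimum spanning tree for the true weights $w$, then
\[
 w(T) \le \hat w(T) \le \hat w(T^\star) < b\, w(T^\star),
\]
where the first inequality uses $w \le \hat w$, the second uses optimality of $T$ for $\hat w$, and the third uses $\hat w < b\,w$. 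Hence $T$ is a $b$-approximation of the true minimum spanning tree, i.e.\ an $n^{1/k}$-approximation.

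Next I would build the silent self-stabilizing algorithm itself on top of the polynomial-time exact construction of~\cite{BlinF15}, but running it on the coarsened instance with weights $\hat w$ rather than on $w$. The key observation to exploit is that the memory of that construction, both for the computed output and for the distributed proof certifying minimality, is governed by the number of distinct weight values: the spanning-tree structure (parent pointers and a distance-to-root field for acyclicity) always costs $O(\log n)$ bits, while the additional information certifying optimality scales with the granularity of the weights. With $k$ weight classes this extra part costs $O(\log n \cdot \log k)$ bits, so that the total interpolates between $O(\log^2\! n)$ at full precision ($k = n$, recovering an exact minimum spanning tree) and $O(\log n)$ for a single class ($k = 1$, where all edges are indistinguishable and any spanning tree is optimal). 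Silence and polynomial stabilization time would be inherited from~\cite{BlinF15}, since coarsening only relabels the (locally computable) weights and does not alter the structure of the construction.

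The main obstacle I anticipate is not the static proof-labeling bound but the \emph{self-stabilizing construction under the reduced memory budget}. Coarsening creates many ties—large sets of edges sharing the same rounded weight—so the minimum spanning tree of $\hat w$ is highly non-unique, and a silent algorithm must pin down one specific tree by a deterministic, purely local tie-breaking rule (for instance by identifiers) that is itself locally checkable. The delicate point is to design a certificate that still fits in the smaller $O(\log n \cdot \log k)$ budget yet lets every node verify, by reading only its neighbors, both that the output is a spanning tree and that it is optimal for $\hat w$ under the chosen tie-break; simultaneously, the convergence analysis must guarantee that from an arbitrary corrupted configuration the system cannot silently settle on a non-tree or on a tree that is too heavy in the coarse regime. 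I expect reconciling the compressed, locally verifiable certificate with polynomial-time convergence to be the technical heart of the argument, whereas the weight-coarsening and approximation bookkeeping above are routine.
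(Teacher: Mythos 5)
Your weight-coarsening half is essentially the paper's own first ingredient (its Lemma~\ref{lem:black-box}), and your chain of inequalities $w(T) \le \hat w(T) \le \hat w(T^\star) < b\, w(T^\star)$ is exactly the right argument for the approximation guarantee. One minor flaw there: pure power-of-$b$ bucketing cannot reach the exact endpoint of the trade-off. With $k=n$ classes you get ratio $n^{1/n}>1$, not $1$, and since coarsening can merge distinct integer weights into one bucket, the computed tree need not be a true MST; moreover, separating \emph{all} integer weights in $[1,n]$ by powers of $b$ forces more buckets than weights. This is precisely why the paper abandons vanilla exponential bucketing in favor of its ``milestone'' construction, which interpolates cleanly down to approximation exactly $1$.

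The genuine gap is in the second half: you run the coarsened instance through the exact algorithm of~\cite{BlinF15} and assert that its memory is ``governed by the number of distinct weight values,'' costing $O(\log n \cdot \log k)$ with $k$ classes. That property is unsupported, and the paper's position is that it fails: the algorithm of~\cite{BlinF15} is a Boruvka-style construction whose certificate book-keeps the merging history, storing fragment/component \emph{identifiers} over $\Theta(\log n)$ phases --- $\Theta(\log n)$ bits per phase regardless of how coarse the weights are --- so coarsening does not push it below $\log^2 n$. The paper states explicitly that this book-keeping approach seems ``difficult if not impossible'' to use under $\log^2 n$, and that is why the bulk of the paper is devoted to a \emph{new} exact algorithm (its Theorem~\ref{thm:optimal-algorithm}) whose space is $O(\log n \cdot s)$, parameterized by the weight-encoding size $s$: a two-phase, build-then-certify design in which a non-self-stabilizing distributed Kruskal (with a token circulation on an auxiliary spanning tree) first constructs the MST, and a second algorithm then erects the proof-labeling scheme of~\cite{KormanK07} --- whose label size is $O(\log n \cdot s)$ because each of the $O(\log n)$ recursion levels stores a maximum weight ($s$ bits) plus $O(1)$ orientation bits --- with a reset procedure tying the pieces together. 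In short, what you correctly flag at the end as ``the technical heart of the argument'' is exactly what your proposal assumes away, by attributing to~\cite{BlinF15} a parameterization by weight granularity that it does not have.
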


The precise trade-off has a complicated expression, thus we do not write explicitly here. It is given in Lemma~\ref{lem:black-box}.
The two extreme values, $O(\log^2n)$ for an MST and $O(\log n)$ for a simple spanning tree are optimal (see~\cite{KormanKP10, KormanK07} for the lower bounds).
We get a smooth trade-off between these extremes, with for example $O(\log n \log \log n)$ space for a 2-approximation.

One of the two ingredients to achieve this result is an exact algorithm for MST, which is self-stabilizing, silent and polynomial-time, and uses $O(\log n \cdot s)$ space,\footnote{Here and everywhere in the paper, $\log n \cdot s$ should be read as $(\log n) \times s$.} where $s$ is the number of bits used to encode an edge weight.\footnote{Note that as we assume the weights are polynomial in $n$, $s$ is in $O(\log n)$.}

\begin{theorem}\label{thm:optimal-algorithm}
There exists a silent self-stabilizing algorithm for (exact) minimum spanning tree, with $O(\log n \cdot s)$ memory, that stabilizes in polynomial time.
\end{theorem}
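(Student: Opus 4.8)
The plan is to realize Borůvka's algorithm as a silent self-stabilizing procedure whose final state is simultaneously a minimum spanning tree and a local certificate of its own minimality. Recall that Borůvka's algorithm proceeds in $O(\log n)$ \emph{phases}: starting from singleton fragments, in each phase every fragment selects its minimum-weight outgoing edge, and all selected edges are added at once, merging fragments; since each phase at least halves the number of fragments, $\lceil\log_2 n\rceil$ phases suffice. The target output is, at each node, a parent pointer in the final tree together with a \emph{phase label}: a sequence of $O(\log n)$ entries, one per Borůvka phase, recording for that phase the identifier of the node's fragment and the weight of its fragment's minimum outgoing edge.

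The space bound and the silence argument rest on the proof-labeling scheme for MST of Korman and Kutten~\cite{KormanK07, KormanKP10}, whose labels have size $O(\log n\cdot s)$ precisely because they consist of $O(\log n)$ phases carrying a weight on $s$ bits. First I would recast this scheme as a set of purely local verification rules: each node checks, for every phase, that its recorded fragment is consistent with those of its neighbors (fragments are unions of lower-level fragments), that the claimed minimum outgoing edge of its fragment is indeed minimum among the edges leaving it, and that the tree edges are exactly the selected edges across all phases. If all nodes pass all checks, the encoded tree is the MST, which is exactly the correctness of the scheme. Since a node that passes every check is not enabled, this verification layer guarantees silence once a correct configuration is reached, and it fits in $O(\log n\cdot s)$ bits.

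It then remains to turn verification into a converging computation. On top of a standard self-stabilizing rooted spanning tree (electing the minimum \ID\ as root and maintaining, at each node, a parent pointer and an integer distance to the root to guarantee loop-freedom), I would compute the phase labels bottom-up: the minimum outgoing edge of a fragment is obtained by aggregating candidate weights along the fragment's internal tree and broadcasting the winner, and the fragment identifiers of phase $k+1$ are derived from those of phase $k$ by the merging rule. A node whose labels violate a rule resets the offending phase and recomputes it from the phases below. Ordering the phases and using the root-distances as a well-founded measure forces monotone progress: the lowest incorrect phase is repaired within a polynomial number of rounds, after which the next phase stabilizes, and so on through the $O(\log n)$ phases, yielding polynomial stabilization time in the spirit of the construction of~\cite{BlinF15}.

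The main obstacle is the interaction between self-stabilization and the layered structure: an adversarial initial configuration can plant mutually consistent but globally wrong labels (for instance a spurious cycle of parent pointers, or a fictitious fragment spanning the whole graph), and these must be detected and erased locally, without exceeding the $O(\log n\cdot s)$ budget and without ever passing through a configuration that the verification layer would wrongly accept. The crux is therefore to design the distance and level fields so that every illegal configuration enables a correcting move at some node (ruling out both deadlocks and silently accepted wrong outputs) while those same fields serve as the potential function bounding the convergence time. Reconciling loop-freedom, polynomial time, and the tight space bound simultaneously is where the real work lies.
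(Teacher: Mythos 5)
There is a genuine gap, and it sits exactly where the paper locates the main difficulty. Your certificate stores, for each of the $O(\log n)$ Bor\r{u}vka phases, \emph{the identifier of the node's fragment} together with the weight of the fragment's minimum outgoing edge. A fragment identifier is essentially a node identifier, hence costs $\Theta(\log n)$ bits per phase, so your labels occupy $\Theta(\log^2 n)$ bits no matter how small $s$ is --- the $s$-bit weights are not the dominant term. Your justification conflates two different proof-labeling schemes: the fragment/phase-based scheme you describe is the $O(\log^2 n)$-bit scheme of \cite{KormanKP10} (the one book-kept by the algorithm of \cite{BlinF15}), whereas the $O(\log n \cdot s)$-bit scheme of \cite{KormanK07} is structurally different. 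That scheme is built on a recursive \emph{centroid decomposition} of the tree: per level, a node stores the maximum weight on its path to the current center ($s$ bits), a constant number of orientation bits (encoded relative to a single ID-based orientation), and subtree numbers whose concatenated length is kept to $O(\log n)$ only by numbering subtrees in decreasing order of size. Nothing in your bottom-up fragment computation produces or verifies such a decomposition, so the space bound of the theorem is not achieved. This is precisely why the paper abandons the ``certify the Bor\r{u}vka execution while running it'' design --- it states this route seems difficult if not impossible below $\log^2 n$ space --- and instead uses a modular two-phase approach: first build the MST outright (a token-based distributed Kruskal in $O(s+\log n)$ space), and only once the tree is finished, run a second algorithm that constructs the centroid-based certificate of \cite{KormanK07} (moving a candidate center until it is balanced, announcing subtree numbers one at a time, propagating maximum weights by a snap-stabilizing PIF), with an external reset procedure \cite{DevismesJ19} catching both on-the-fly inconsistencies and final verification failures.

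A secondary weakness is that your convergence argument (repairing the lowest incorrect phase, with root distances as a potential function) is only sketched, and you yourself flag that avoiding both deadlocks and wrongly accepted configurations under adversarial initialization is ``where the real work lies''; in the paper this burden is discharged by the reset mechanism together with the guarantee that an execution started from a clean configuration never triggers a reset (plus an explicit memory cap of $\alpha \cdot \log n \cdot s$ bits that converts any overflow caused by a corrupted initial state into a detected fault). But the space miscount is the fatal flaw: to prove the theorem as stated you must either switch to the centroid-based certificate, which forces the build-then-certify structure, or show how to name fragments with $o(\log n)$ bits per phase, which is not something your proposal provides.
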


It is known that an MST requires $\Omega(\log n \cdot s)$ space \cite{KormanK07}.
Therefore our algorithm improves on the state-of-the-art by proving that, even with a parametrization by~$s$, MST is part of the set of problems that can be solved in optimal space \emph{and} polynomial-time.

\subsection{Our techniques}

Our algorithm has two main ingredients: the exact algorithm that we have already mentioned and a technique to transform the weights. The weight transformation changes the original weights into approximated weights that can be encoded in smaller space. Then we basically feed these approximated weights to the exact algorithm and get as a result an approximate solution. The better the approximation of the weight, the better the approximation of the final solution, but the larger the space used.

The weight transformation (Lemma~\ref{lem:black-box}) takes as input a weight in $[1,poly(n)]$, encoded on $\Theta(\log n)$ bits, and outputs a weight in a smaller range, hence using less bits of memory. 
The simplest form of the technique is the following: replace each weight by the position of its most significant bit. 
This way when we write weights in the memory, we use exponentially less space: $s$ will be in $O(\log \log n)$ instead of $\Theta(\log n)$. 
Of course by this operation we loose some precision. 
Namely, we only have the information to recover a 2-approximation of each weight. 
Now if we feed these ``new weights'' to an exact algorithm for minimum spanning tree, then we will get a 2-approximation, and using much less space.
We design an extension of this technique, that allows to get the whole trade-off between space and approximation. This extension is more complicated, but is still based on manipulations of the binary representation of the weights.  

The rest of the paper consists in designing the exact self-stabilizing algorithm for minimum spanning tree in space $O(\log n \cdot s)$. 
This exact algorithm does not follow the usual design of silent self-stabilizing algorithms.
A silent algorithm typically stores some key pieces of information, \emph{while} it is building the output, \emph{e.g.} while selecting the edges of the MST. 
These pieces of information form a certificate of correctness that allows, during and after stabilization, to check whether the construction is correct. And if the construction is correct then the output is correct. 
This is for example the way the $O(\log^2\!n)$-space algorithm of~\cite{BlinF15} is designed, book-keeping the important information of a Boruvka-inspired algorithm.
Unfortunately, it seems that  this approach is difficult if not impossible to use when one wants to go below $\log^2\!n$ space for minimum spanning tree.
Instead, we use a two-phase approach: we first build a minimum spanning tree, and, once it is finished, we certify it. 
This paper is, as far as we know, the first occurrence of such a modular approach. 
The certification we use is the proof-labeling scheme of~\cite{KormanK07}. As a side result we answer a question of~\cite{KormanK07} where designing a self-stabilizing algorithm using this certification was left as an open problem.
  
\subsection{Outline}

We start in Section~\ref{sec:model} with a description of the model.
In Section~\ref{sec:general-description}, we describe the general structure of our algorithm. 
Then in Section~\ref{sec:approximation}, \ref{sec:construction}, and ~\ref{sec:labeling-algorithm}, we describe the different components of our algorithm. 
Section~\ref{sec:distributed-proof} is a high-level description of the certification of~\cite{KormanK07} that we use in Section~\ref{sec:labeling-algorithm}.


\section{Model}
\label{sec:model}

We consider that a network is represented by an undirected connected graph $G=(V,E)$ where $V$ is a set of processors (or nodes), and $E$ is a set of edges that represent communication channels.
We denote the number of nodes by $n$.
Every node $v$ is given a unique identifier $\ID_v$, and every edge has a weight. 
Both identifiers and weights are polynomial, that is, they are integers in $[1,poly(n)]$. The identifiers are all distinct, but the weights are not required to be distinct.

We want to compute an approximation of a minimum spanning tree.
A $k$-approximation of an MST is a spanning tree whose weight is not larger than $k$ times the weight of an MST. 
Note that in our definition of approximation, we do not relax the requirement of acyclicity, thus it is not the same kind of approximation as the one used in the literature about spanners.

\paragraph*{State model}

Our algorithm is designed for the classic \emph{state model}~\cite{Dolev00}.
In this model, every node has a state, and communication between neighbors is modeled by direct reading of states instead of exchanges of messages. 
This state is the \emph{mutable memory}, that is, it is the part of the memory that can be modified, and also the one that is counted when we consider space complexity. 
There is also the \emph{non-mutable memory}, that contains for each node, its identifier and the weights of the adjacent edges, as well as the code of the algorithm.
The tuple of all the states of the network is called the \emph{configuration}, and the execution of an algorithm is therefore described by a sequence of configurations.

In the state model, an algorithm is usually described as a set of rules. 
A rule basically states that if the local view of a node satisfies some property, then the node can change its state to a specified new value.
Here by ``local view'' we mean the state of a node, the states of its neighbors, the identifier of the node, and the weights of the adjacent edges.
If there exists a rule, such that the property of the node's view is satisfied, then we say that the node is \emph{enabled}. 
(Note that for a node deciding whether it is enabled or not could take some time and space. As in most models in network distributed computing, such local computation is not taken into account for the time and space complexity.) 
The asynchrony of the system is modeled by a scheduler who chooses, at each step, a non-empty subset of enabled nodes, that are allowed to apply a rule.
We consider the harshest scheduler, the \emph{distributed unfair scheduler}, which has no further constraint for the choices it makes. Other schedulers considered in the literature can have, for example, fairness constraints: they cannot activate always the same nodes. See~\cite{DuboisT11} for a survey of the schedulers of the literature. The choice of the distributed unfair scheduler makes our algorithm the most robust possible. 
 
To compute time complexities, we use the definition of \emph{round} of~\cite{DolevIM97}. 
This definition captures the execution rate of the slowest process in any execution. The first round of an execution $\epsilon$, noted $\epsilon'$, is the minimal prefix of $\epsilon$ such that every node that was enabled in the initial configuration, either has taken a step or is not enabled anymore. Let $\epsilon''$ be the suffix of $\epsilon$ such that $\epsilon=\epsilon'\epsilon''$. The second round of $\epsilon$ is the first round of $\epsilon''$, and so on.

\paragraph*{Distributed proof, proof-labeling schemes and silent stabilization}

Given a property, for example `a set of pointers  defines a spanning tree', a \emph{distributed proof} (or \emph{distributed certificate}) is a labeling of the nodes that certifies that the property is satisfied. 
It is usually presented as a \emph{proof-labeling scheme} \cite{KormanKP10}. 
In such a scheme, the first element is an oracle, called the \emph{prover}, which provides each node with a label. 
The second element of a scheme is a \emph{verification algorithm}. 
This algorithm, run at a node~$v$, takes as input the view of~$v$, including the labels of $v$ and of its neighbors, and decides whether to \emph{accept} or to \emph{reject}.
A scheme is correct for a property $P$, if (1) for any configuration satisfying $P$, there is a way for the prover to make all nodes accept, and (2) for any configuration \emph{not} satisfying $P$, there is \emph{no} way for the prover to make all nodes accept.
The performance of a scheme is measured by the size of the labels in number of bits (all labels have the same size).
The notion of distributed proof is tightly related to the concept of \emph{silent self-stabilizing algorithm}. 
An algorithm is \emph{self-stabilizing} if starting from an arbitrary configuration, it reaches a correct configuration after some finite time, called the \emph{stabilization time}, and stays in correct configurations afterwards. 
Such an algorithm is \emph{silent}, if the algorithm reaches a correct configuration and then stays silent: it does not change the states anymore, or in other words, no node is enabled. 
To be sure to be in a correct configuration, a silent self-stabilizing algorithm has to keep some certification of the correctness. This certification ensures that, if the configuration is not correct, then at least one node will detect it, be enabled, and start the recovery (for example launching a reset).
This is basically the same as the notion of distributed proof above~\cite{BlinFP14}, except that the proof is not given by an oracle: it is built by the algorithm.
We refer to \cite{FeuilloleyF16, Feuilloley19b} for surveys on distributed proofs.


\section{General description}
\label{sec:general-description}

Our algorithm is made of several components.
Basically, we have several algorithms that will operate one after the other, in order to reach a configuration with a minimum spanning tree certified by a distributed proof. 
The algorithms are designed to work if they start from a clean situation (for example our first algorithm expects its variables to be empty) and we have a reset mechanism that will go back to such a situation if one of our algorithms detects a problem.

\paragraph*{Two-phase approach}

In order to reach a configuration where the nodes can safely stop updating their states, we need first to have a correct solution at hand, and second to allow the nodes to be sure that indeed the solution is correct. As said earlier, the classic way to do this is to keep in memory some key extra pieces of information gathered during the computation. As it seems that this approach is difficult to implement here, we use another way: we first build the solution alone, and then we build a certification (\emph{i.e.} a distributed proof) of this solution. 

The strategy of aiming for a distributed proof simplifies the design of the algorithm.
Indeed, if something goes wrong in the computation because of the initial configuration, then we have to face two rather simple situations.
In the first case, one of our algorithms detects the error, for example because there is an obvious inconsistency between the neighbors' states.
In the second case, the problem is subtle enough to not be detected during the run of our algorithms, but then if the output is not correct, the distributed proof that is built cannot be correct either. 
In both cases the error is detected, and a reset is launched. 
In other words, either there is something obviously wrong that is caught on the fly, or there is something that is more subtle, and it is caught at the end.

The difficulties that remain are the same as for any self-stabilizing algorithm. First, one has to ensure that for any computation, starting from a clean configuration, we cannot end up in a configuration that is detected as incorrect. 
If this were to happen, then the scheduler could make the algorithm go through a reset infinitely often, and it would never stabilize. 
Second, we have to make sure that the algorithm does not get stuck in a position where no node can be activated.

\paragraph*{The components and how they work together} 

We have three main components: one algorithm that builds the minimum spanning tree (detailed in Section~\ref{sec:construction}), one algorithm that takes this tree and certifies it, thus allows the silent stabilization (detailed in Section~\ref{sec:labeling-algorithm}), and a reset algorithm that can erase everything and go back to a clean configuration.  
We describe the algorithm in a modular way to ease the reading, but in the end our result is one algorithm. 
In particular, it is important to have the three pieces working together. Note that such modular design for a self-stabilizing algorithms is not new, even for unfair schedulers: for example in the recent and celebrated coloring algorithm of~\cite{BarenboimEG18} there is a first algorithm reducing the number of colors very fast and then a second algorithm to finish the job by eliminating the last extra-colors. 

In our algorithm, the reset procedure is dominant, in the sense that if a reset is launched it will basically overrule the other procedures.
For this reset we take a solution from the literature: Devismes and Johnen~\cite{DevismesJ19} recently proposed a cooperative (that is, tolerating multiple simultaneous initiators) silent self-stabilizing reset algorithm that satisfies our constraints in terms of scheduler, stabilization time and space complexity. Then to articulate the two other pieces, we simply use flags that indicate for each node which algorithm it is running. If the algorithm are not run one after the other, then there will be a local inconsistency between these flags, and the reset will be launched. 

One last element about how the pieces work together is related to the weight transformation.
As said earlier, a key ingredient to get our approximation algorithm in small space (Theorem~\ref{thm:approximation}) is a transformation of the weights. 
This transformation consists in replacing each weight by a smaller weight. 
But, as we have limited space, we cannot store the transformed weights of all the edges adjacent to a node in its state (there can be $n-1$ edges adjacent to a node).
Instead, every time a step is taken, the node recomputes the new weights to know which rule applies.


\section{Approximation-memory trade-off}
\label{sec:approximation}

In this section, we show how we can replace the original weights with approximated weights to decrease the number of bits used to encode them, while preserving guarantees on the MST computed.
The precise trade-off between the space used for the new weights and the approximation is given in Lemma~\ref{lem:black-box}. As the expression is not very elegant, this is the only place of the paper where we write it explicitly. The trade-off for the whole algorithm can be derived from the values of this lemma: simply multiply by $O(\log n)$.

\begin{lemma}\label{lem:black-box}
There exists a transformation of the weights that allows for a trade-off between space needed to encode the new weights, and the quality of the tree one can compute from them. More precisely, for every integer $k$ in the range $[-\log\log n,\log n]$, we can get new weights with the following properties:
\begin{itemize}
\item for $k=\log n -1 $, approximation 1 and size $\log n+1$,
\item $k \in [0, \log n-2]$,, approximation $1+\frac{1}{2^k}$ and space $k+ \log(\log n -k+1))+1$,
\item for $k\in [-\log\log n,0]$ approximation $2^{2^{-k}}$ and space $ \log \left(\frac{\log n}{2^{-k}}+1\right)+1$.
\end{itemize}
\end{lemma}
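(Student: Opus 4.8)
The plan is to exhibit an explicit transformation based on a floating-point-style encoding of the binary representation, and to reduce the whole statement to a per-edge approximation guarantee. First I would record the elementary reduction that carries the conceptual weight: if $\tilde w$ is a weight function satisfying $\tilde w(e) \le w(e) \le \alpha\,\tilde w(e)$ for every edge $e$, then any minimum spanning tree $\tilde T$ for $\tilde w$ is an $\alpha$-approximation of a minimum spanning tree $T^*$ for $w$. This follows from the chain $w(\tilde T) \le \alpha\,\tilde w(\tilde T) \le \alpha\,\tilde w(T^*) \le \alpha\, w(T^*)$, where the middle inequality uses optimality of $\tilde T$ for $\tilde w$ and the outer two use the per-edge sandwiching. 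Consequently it suffices to design, for each value of $k$, a transformation meeting the claimed approximation factor within the claimed number of bits, and the same ``round down'' direction of approximation will serve in every regime.

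Next I would define the transformation. Write $w \in [1,n]$ in binary and let $p$ be the position of its most significant bit, so that $w = 2^p(1+m)$ with $m \in [0,1)$. For $k \ge 0$ I keep the exponent $p$ together with the first $k$ mantissa bits, i.e.\ I round $m$ down to $\tilde m = 2^{-k}\lfloor 2^k m\rfloor$ and set $\tilde w = 2^p(1+\tilde m)$. A one-line estimate gives $\tilde w \le w \le (1+2^{-k})\,\tilde w$, hence approximation $1+2^{-k}$; for $k=\log n-1$ all significant bits are retained, so the representation is exact and the factor becomes $1$. For the space I would observe that the $k$ mantissa bits already store any weight below $2^k$ exactly, so the exponent field only has to distinguish the $\log n - k$ scales with $p \ge k$ together with one value flagging the small case, which takes $\log(\log n - k + 1)$ bits; adding the $k$ mantissa bits and one rounding bit yields the stated $k + \log(\log n - k + 1) + 1$.

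For $k \le 0$ I set $j = 2^{-k}$, discard the mantissa entirely, and round the exponent itself down to a multiple of $j$, taking $\tilde p = j\lfloor p/j\rfloor$ and $\tilde w = 2^{\tilde p}$. Since $0 \le p - \tilde p \le j-1$, one obtains $\tilde w \le w < 2^{\tilde p + j} = 2^{2^{-k}}\,\tilde w$, giving approximation $2^{2^{-k}}$. Here $\tilde p$ ranges over the multiples of $j$ in $[0,\log n)$, so it takes at most $\tfrac{\log n}{2^{-k}}+1$ distinct values and is encodable in $\log\!\left(\tfrac{\log n}{2^{-k}}+1\right)+1$ bits. I would then check that the two regimes agree at the common endpoint $k=0$, where both prescriptions yield approximation $2$ and space $\log(\log n+1)+1$, which confirms that the three items glue into a single continuous trade-off.

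The main obstacle I expect is not the approximation factors, which fall out of the two displayed estimates, but the precise space accounting. One must argue carefully that the exponent field compresses to $\log(\log n - k + 1)$ rather than the naive $\log\log n$, by genuinely exploiting that small weights are captured exactly by the $k$ mantissa bits, and one must keep track of ceilings and of the extra $+1$ bit so that the bounds hold exactly as stated and not merely up to asymptotic constants. A secondary point of care is to confirm that rounding down is applied consistently so that the single per-edge sandwiching hypothesis of the reduction is met across all three items.
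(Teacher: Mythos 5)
Your proposal is correct, and in substance it produces exactly the same transformation as the paper: the set of values representable by your floating-point encoding (exponent plus $k$ truncated mantissa bits for $k\ge 0$, exponent quantized to multiples of $2^{-k}$ for $k\le 0$) coincides with the paper's set of \emph{milestones}, which the paper builds iteratively from the powers of two by inserting midpoints (for $k>0$) or deleting every other milestone (for $k<0$). The differences are presentational but real: the paper rounds \emph{up} to the next milestone and obtains the space bound by counting milestones ($2^k(\log n -k+1)$ of them for $k\ge 0$, at most $\frac{\log n}{2^{-k}}+1$ for $k\le 0$) and taking a logarithm, whereas you round \emph{down} and account for space by an explicit bit layout of the exponent and mantissa fields; both give the stated bounds, and either rounding direction is compatible with the reduction to MST approximation. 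You also make that reduction explicit via the chain $w(\tilde T)\le \alpha\,\tilde w(\tilde T)\le\alpha\,\tilde w(T^*)\le\alpha\,w(T^*)$, which the paper only asserts informally (``an MST computed on these new weights is at most twice as heavy as the MST with the original weights''); writing it out is a genuine improvement in rigor. One thing your write-up gains for free but should perhaps state: the paper deliberately avoids bucketing with non-power-of-two exponents so that the transformation is a simple manipulation of binary representations, and your encoding satisfies this requirement by construction (truncating mantissa bits, flooring the exponent to a multiple of $2^{-k}$). Finally, your endpoint checks (exactness at $k=\log n-1$, agreement of the two regimes at $k=0$) mirror the paper's, and your extra $+1$ bits absorb the same ceiling/non-power-of-two-$n$ slack that the paper handles by extending the milestone range past $n$.
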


Before proving the lemma, let us restate the three points of the trade-offs that we have already mentioned.

\begin{corollary}\label{coro:points}
The construction of Lemma~\ref{lem:black-box} gives in particular:
\begin{itemize}
\item Exact solution, with weights encoded on $O(\log n)$ bits.
\item 2-approximation, with weights encoded on $O(\log \log n)$ bits.
\item A trivial guarantee ($poly(n)$-approximation), with weights encoded on a constant number of bits.
\end{itemize}
\end{corollary}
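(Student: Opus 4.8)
The plan is to derive Corollary~\ref{coro:points} by instantiating the three regimes of Lemma~\ref{lem:black-box} at the appropriate values of $k$, and then simplifying each resulting space bound to its asymptotic order. Since the lemma is assumed proved, the only work here is arithmetic substitution and asymptotic bookkeeping, so I expect no genuine obstacle — the ``hard part'' is merely matching each claimed consequence to the correct interval for $k$ and checking the space expression collapses as advertised.

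First I would establish the \emph{exact solution}. The natural choice is $k=\log n - 1$, which falls in the first bullet of the lemma: it directly gives approximation $1$ (an MST) and size $\log n + 1$, which is $O(\log n)$ bits. This is immediate, so I would state it in one line.

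Next I would treat the \emph{$2$-approximation}. Here the target approximation ratio is $2$, so I would look at the second regime $k \in [0,\log n - 2]$ with approximation $1 + \tfrac{1}{2^k}$ and take $k=0$, which yields approximation $1 + 1 = 2$ and space $0 + \log(\log n - 0 + 1) + 1 = \log(\log n + 1) + 1$. I would then observe that this quantity is $O(\log\log n)$, giving the claimed bound. (One could equally note that the third regime at $k=0$ gives approximation $2^{1}=2$ with space $\log(\log n + 1) + 1$, the same value, so the two pieces agree at the boundary $k=0$ as a consistency check.)

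Finally I would handle the \emph{trivial $poly(n)$-approximation in constant space}. This lives in the third regime $k \in [-\log\log n, 0]$, and the extreme choice $k = -\log\log n$ maximizes the tolerated error while minimizing space. Substituting $2^{-k} = \log n$ gives approximation $2^{2^{-k}} = 2^{\log n} = n = \mathrm{poly}(n)$, and space $\log\!\left(\frac{\log n}{\log n} + 1\right) + 1 = \log 2 + 1 = 2$ bits, i.e. $O(1)$. I would remark that $\log n = 2^{\log\log n}$ justifies the identity $2^{-k} = \log n$, and conclude that all three bullets of the corollary follow directly from Lemma~\ref{lem:black-box} by these three substitutions.
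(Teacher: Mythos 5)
Your proposal is correct and follows exactly the paper's own argument: the paper also derives the three bullets by instantiating Lemma~\ref{lem:black-box} at $k=\log n-1$, $k=0$, and $k=-\log\log n$ respectively. Your explicit arithmetic (e.g.\ verifying $2^{2^{-k}}=n$ and space $\log 2 + 1$ at $k=-\log\log n$) simply spells out what the paper states in one line.
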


The exact solution directly follows from the case $k=\log n -1$. The 2-approximation corresponds to $k=0$. The arbitrary approximation corresponds to $k=-\log \log n$.

The full proof of Lemma~\ref{lem:black-box} is given below, but we first give a sketch of the argument now. The explanation is illustrated by Figure~\ref{fig:bucketing}.

The core idea of the weight transformation is to group the weights into buckets, that will be assigned the same new weight. 
The larger the buckets, the less bits needed to encode the group name, but the larger the rounding error on each weight. 
The basic idea is to use exponential bucketing. 
For example, with exponent 2, a weight $w$, will be in a bucket $b$, if $2^{b-1} < w \leq 2^{b}$. 
This way, every weight is at most doubled, and an MST computed on these new weights is at most twice as heavy as the MST with the original weights.
The good thing is that now there are $O(\log n)$ different weights instead of $n$, which means it can be encoded  on $O(\log \log n)$ bits instead of $O(\log n)$. Now for various reasons explained in the full version \cite{BlinDF19}, we do not use this vanilla version of exponential bucketing with other bases for other approximation ratios. Instead we consider the series of the rounding values 1, 2, 4, 8, ..., $2^{\log n}$  given by the technique above, that we call \emph{milestones}, and work on it. We remove some of the milestones to get a coarser approximation or we add new ones to get a more fine-grain approximation.

\begin{figure}
\begin{center}
\scalebox{0.9}{
\input{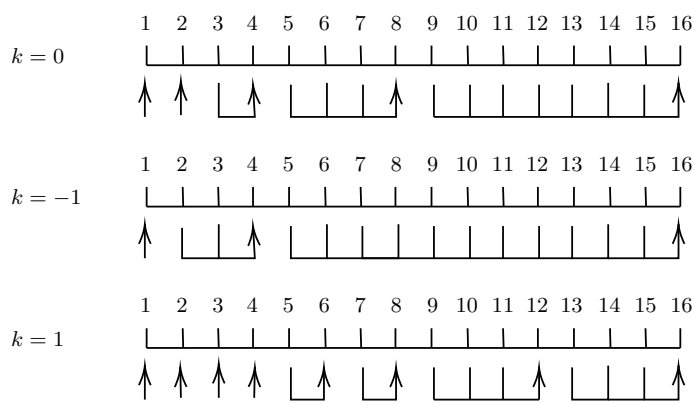}}
\caption{\label{fig:bucketing} Illustration of Lemma~\ref{lem:black-box}. For $k=0$, for example the weight 2, stays as 2, but weight 3 is transformed into 4. For the case $k-1$, we remove basically half milestones, and for $k=1$ we basically double the number of milestones. }
\end{center}
\end{figure} 

\begin{proof}[Proof of Lemma~\ref{lem:black-box}]

As said in the sketch of the proof, the technique for the weight transformation for a 2-approximation is simply to round the the weight to the next power of 2. That is, we bucket all the weights between to powers of 2 to the same weight. 

To adjust the trade-off between approximation and space, we could change the exponent: replacing 2 by a larger number, means worse approximation but better size, and smaller number means better approximation, but larger size. 
But this has two drawbacks. 
First, this cannot be used to get a trade-off that goes all the way to the exact solution. 
Indeed even taking the exponent very close to 1, one does not get all buckets of size 1. 
Second, the computations with an arbitrary exponent would use a lot of rounding, and even if in the model we do not restrict the computational power of the nodes, it is more elegant to have a transformation that works with simple manipulations of the binary representation of the weights. 
For example, the bucketing with a power of 2 is efficient, as one can decide the bucket by only considering the position of the most significant bit.
For these two reasons we use a slightly different technique.

For this technique, we consider \emph{milestones} which are simply numbers between $1$ and $n$, such that in the transformation, each weight is rounded up to the closest milestone. 
In the following claim, \emph{approximation $x$}, means that the rounding to the next milestone implies that the weight is multiplied by at most $x$.
\begin{claim}\label{clm:milestones}
Suppose $n$ is a power of 2.
For every integer $k$ in the range $[-\log\log n,\log n-1]$, there exists a set of milestones with the following properties:
\begin{itemize}
\item for $k=\log n -1 $, approximation 1 and size $n$
\item for $k \in [0, \log n-2]$, approximation $1+\frac{1}{2^k}$ and size $2^k(\log n -k+1)$.
\item for $k\in [-\log\log n,0]$, approximation $2^{2^{-k}}$ and size $\frac{\log n}{2^{-k}}+1$.
\end{itemize}
\end{claim}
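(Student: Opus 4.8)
The plan is to prove the claim by exhibiting, for each regime of $k$, an explicit set of milestones and then checking the two required quantities: the number of milestones (the \emph{size}) and the multiplicative rounding error (the \emph{approximation}). The natural anchor is the case $k=0$, where the milestones are exactly the powers of two $2^0,2^1,\dots,2^{\log n}$; there are $\log n +1$ of them, and any integer weight $w$ with $2^{b-1}<w\le 2^b$ rounds up to $2^b<2w$, giving approximation $2=1+\tfrac{1}{2^0}$. I would then obtain the coarse regime ($k<0$) by \emph{deleting} milestones and the fine regime ($k>0$) by \emph{inserting} milestones inside each ``octave'' $(2^{b-1},2^b]$, exactly as the proof sketch suggests.

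For the coarse regime $k\in[-\log\log n,0]$, set $t=2^{-k}$ (an integer power of two in $[1,\log n]$) and keep only every $t$-th power of two, i.e. the milestones $2^{0},2^{t},2^{2t},\dots,2^{\log n}$. The exponents are the multiples of $t$ in $[0,\log n]$, so there are $\tfrac{\log n}{t}+1=\tfrac{\log n}{2^{-k}}+1$ of them, matching the stated size. For the approximation, a weight $w\in(2^{(i-1)t},2^{it}]$ rounds up to $2^{it}$, and since $w>2^{(i-1)t}$ the ratio satisfies $2^{it}/w<2^{it}/2^{(i-1)t}=2^{t}=2^{2^{-k}}$, as required.

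For the fine regime $k\in[0,\log n-2]$, I would subdivide each octave into $2^k$ equal parts whenever the subdivision points are integers. Concretely: include the milestone $1$; for each $b\le k$ take all integers of $(2^{b-1},2^b]$ (there are only $2^{b-1}\le 2^{k-1}<2^k$ of them, so this is cheaper than subdividing); and for each $b\ge k+1$ take the $2^k$ evenly spaced values $2^{b-1}+j\,2^{b-1-k}$ for $j=1,\dots,2^k$, which are integers precisely because $b-1-k\ge 0$. Summing with a geometric series, the total is $1+\sum_{b=1}^{k}2^{b-1}+\sum_{b=k+1}^{\log n}2^k = 1+(2^k-1)+2^k(\log n-k)=2^k(\log n-k+1)$, the claimed size. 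For the approximation, octaves with $b\le k$ contain every integer and contribute no error, while in an octave with $b\ge k+1$ consecutive milestones differ by $2^{b-1-k}$; a weight $w$ rounding up to $m^{+}$ has predecessor milestone $m^{-}\ge 2^{b-1}$ with $w>m^{-}$, so $m^{+}/w<m^{+}/m^{-}=1+2^{b-1-k}/m^{-}\le 1+2^{-k}$, giving approximation $1+\tfrac{1}{2^k}$. Finally the endpoint $k=\log n-1$ is immediate: taking all integers of $[1,n]$ gives size $n$ and no rounding, hence approximation $1$.

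The routine parts are the two sums and the ratio estimates; the one place that needs genuine care, and which I expect to be the main obstacle, is the integrality issue in the fine regime. The clean formula $2^{b-1}+j\,2^{b-1-k}$ only yields integer milestones once $b\ge k+1$, so I must treat the low octaves separately by inserting every integer there. Verifying that this ``take everything below, subdivide above'' split reproduces exactly the advertised count $2^k(\log n-k+1)$ (rather than the naive $1+2^k\log n$ one gets by subdividing all octaves) is the crux, and it is also what makes the three regimes agree consistently at their shared endpoints $k=0$ and $k=\log n-1$.
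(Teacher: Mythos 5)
Your proof is correct in substance and is essentially the paper's own construction written in closed form: the paper obtains the fine regime by iteratively inserting the midpoint of every non-empty gap between consecutive milestones, and the coarse regime by iteratively deleting every other milestone (keeping the last one), and the sets these iterations produce are exactly yours --- all integers in octaves $b\le k$ plus $2^k$ equally spaced points in each octave $b\ge k+1$, respectively the powers $2^{jt}$ with $t=2^{-k}$ together with $2^{\log n}$. Your version is actually more explicit where the paper is terse: the paper merely asserts that ``it is not difficult to compute'' the count $2^k(\log n -k+1)$, whereas your geometric sum and the ratio bound $m^{+}/m^{-}\le 1+2^{-k}$ prove it, and the integrality split at octave $k+1$ that you single out as the crux is precisely what the paper's midpoint picture leaves implicit. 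One caveat in the coarse regime: taking only the exponents that are multiples of $t$ in $[0,\log n]$ matches the stated size $\frac{\log n}{t}+1$ only when $t$ divides $\log n$; since $\log n$ need not itself be a power of two this can fail (e.g.\ $\log n=12$, $t=8$), and then you must still keep $2^{\log n}$ as a milestone --- otherwise weights near $n$ have nothing to round up to --- giving $\lfloor \log n/t\rfloor+2$ milestones, which can exceed $\frac{\log n}{t}+1$. This is exactly why the paper only claims ``at most'' that many milestones and says it does ``coarser calculations'' in this regime (the extra bit budgeted in Lemma~\ref{lem:black-box} absorbs the slack); your write-up should hedge the same way instead of claiming the count matches exactly.
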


We start from the set of milestones described above, that is for the exponential bucketing with exponent 2. That is we start with $1, 2, 4, 8, 16, ...,2^{\log n-1}, 2^{\log n}$, for $k=0$.
For $k> 0$, we will add milestones, and for $k< 0$ we will remove milestones.

Let us start with $k\in [0, \log n -2]$.
The construction is iterative. 
From $k$ to $k+1$, we take every interval between two milestones that is non empty, and add the number in the middle. 
For example, for $k=1$, the set of milestones is: $1, 2, 3, 4, 6, 8, 12, 16, ...,  2^{\log n-1}, (3/2)2^{\log n-1}, 2^{\log n}$.
Note that the construction is well-defined as the number of elements in a non-empty interval is always odd.
At each iteration, the number of milestones is at most doubled, but as some intervals are empty, the precise increase is smaller.
It is not difficult, to compute that the precise number of milestones is: $2^k(\log n -k+1)$. 
The approximation ratio is bounded by the largest ratio between two consecutive milestones, which is $1+\frac{1}{2^k}$ (it is reached for a milestone of the original $k=0$ construction and its successor). If we push the construction one step further with $k=\log n -1$, then we get a set of milestones of size $2^{\log n -1}(\log n - \log n +1 +1)=n$, and in this case all the numbers are milestones, and the approximation is 1.

We now tackle the case $k<0$. 
Again we use an iterative construction. 
From $k$ to $k-1$, we remove one in two milestones, except for the last one. 
For example, $k=-1$ corresponds to $1,4,16,..., 2^u, 2^{\log n}$, where $u$ is the largest even integer strictly smaller than $\log n$.
At step~$k$, there are at most $\frac{\log n}{2^{-k}}+1$ milestones, and the approximation is at most $2^{2^{-k}}$. (Here we do coarser calculations, because the exact result depends more closely on the exact value of $n$, and we do not need such precision).
This finishes the proof of the claim.

Note that  the computations only deal with sums of powers of 2, thus they can all be done easily on the binary representations of the weights.

Now if $n$ is not a power of 2, we can set the set of milestones to stop after the first power of 2 that is larger than $n$. This only implies a multiplicative factor 2 for the number of milestones. Now, taking the logarithm to evaluate the space used in memory to encode the milestones, we get the result of the lemma (with the extra bit coming from the size that are not powers of 2).

\end{proof}


\section{MST construction algorithm}
\label{sec:construction}

In this section, we give a distributed algorithm for minimum spanning tree. 
This algorithm is not self-stabilizing; the self-stabilizing part will be taken care of in Section~\ref{sec:labeling-algorithm}. 
As our main goal is to use small space, we do not optimize the time of this construction algorithm (except that we want it to be polynomial), and keep it as simple as possible. 

\begin{lemma}
There exists a distributed algorithm in space $O(s + \log n)$ that builds a minimum spanning tree in polynomial time.
\end{lemma}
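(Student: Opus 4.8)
The plan is to implement a distributed, space-frugal version of Borůvka's algorithm, exploiting the fact that we are allowed polynomial time and that the input (the incident edge weights) lives in the non-mutable memory and can be re-read for free at each step. Throughout the execution I would maintain a forest of \emph{fragments}, each of which is a tree rooted at its minimum-identifier node; the identifier of that root is used as the \emph{name} of the fragment. In its mutable memory, each node $v$ stores only a constant number of fields: a pointer to its parent in its fragment tree, the name of its fragment (the root identifier), a phase counter, and a few flags together with a candidate edge (a neighbor identifier plus the weight of the incident edge) used while the current minimum outgoing edge is being computed. Each identifier costs $O(\log n)$ bits, each weight costs $s$ bits, and the counter costs $O(\log\log n)$ bits, so the total is $O(s+\log n)$, as required.

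A phase of Borůvka would proceed in three steps. First, each fragment determines its minimum-weight outgoing edge: every node scans its incident edges on the fly from the non-mutable memory, discards those whose other endpoint carries the same fragment name (which it reads directly from its neighbor's state), and keeps its own best incident outgoing edge; a convergecast along the fragment tree, orchestrated by a token so that only $O(1)$ nodes are active at a time, then aggregates these local candidates into the global minimum outgoing edge of the fragment at its root. To guarantee that the selection is consistent and acyclic, I would break ties between edges of equal weight using the pair of endpoint identifiers, which yields a strict total order on edges. Second, the selected minimum outgoing edges are added to the solution and used to merge fragments; with this strict tie-breaking rule, the classical analysis of Borůvka shows that the selected edges form a forest over the fragments and that the number of fragments drops by at least a factor two. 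Third, I would re-root and rename each newly merged fragment so that it is again rooted at its minimum-identifier node: a sequential pointer reversal along the relevant tree paths, followed by a broadcast of the new name from the root, both carried out with a single token. After $O(\log n)$ phases a single fragment remains, whose tree edges are exactly the output MST, encoded by the parent pointers.

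Correctness follows from the correctness of Borůvka's algorithm together with the strict total order on edges, which ensures that exactly the edges of the (now unique) MST are selected and that no cycle is ever created. For the complexity, each phase involves a constant number of traversals (the convergecast, the pointer reversal, and the broadcast) of fragments of size at most $n$; since each traversal is sequentialized by a token and a single token crossing costs a bounded number of rounds, each phase costs a polynomial number of rounds, and there are only $O(\log n)$ phases, so the total time is polynomial.

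The hard part will be carrying out the three internal operations of a phase using only $O(1)$ mutable fields per node. The token-based sequentialization is what makes this possible without storing unbounded information at any node, at the cost of time; ensuring that these subroutines interleave cleanly — each node knowing which subroutine and which phase it is in, via the phase counter and the flags — and that the tie-breaking is applied uniformly so that merging never produces a cycle, is where the care of the construction lies. I would deliberately not address the self-stabilizing aspects here, since those are handled by the certification of Section~\ref{sec:labeling-algorithm}.
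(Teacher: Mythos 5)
Your proposal takes a genuinely different route from the paper: you build the tree bottom-up by Bor\r{u}vka-style fragment merging, whereas the paper runs a token-sequentialized Kruskal. Concretely, the paper first builds an auxiliary spanning tree of the whole graph, circulates a \emph{single global} token over it, and lets only the token holder add edges (one weight class per phase, by increasing weight), each node holding a component name (the minimum ID in its current component) that is fully updated by a traversal after every single edge addition. Your space and time accounting is fine: a constant number of IDs, weights and counters per node is $O(s+\log n)$, and $O(\log n)$ phases of polynomially many token moves is polynomial time.

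The gap is in the correctness argument, precisely at the point you defer to ``the care of the construction'': coordination between \emph{asynchronous} fragments. Tie-breaking by endpoint identifiers only guarantees acyclicity of the edges selected within one phase, \emph{assuming every node reads an up-to-date fragment name from its neighbors}. Under the distributed unfair scheduler nothing synchronizes distinct fragments: a fragment $D$ may still be completing phase $k$ while a neighboring fragment $C$ has already merged, renamed, and advanced to phase $k+2$. When $D$ finally attaches to $C$ via its selected outgoing edge, the nodes of $D$ still carry the name of $D$; if $C$ is meanwhile running its own outgoing-edge convergecast, its nodes see $D$'s nodes as a foreign fragment, and $C$ may select a \emph{second} edge into $D$ --- a cycle. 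This stale-name problem is exactly the crux that the Gallager--Humblet--Spira algorithm resolves with its level/waiting discipline (a node's test is not answered, or is ignored, until the neighbor's level has caught up), and any Bor\r{u}vka-based proof must spell out such a rule and prove it deadlock-free; a phase counter and flags alone do not do it, and your claim that acyclicity follows from the tie-breaking order misses this failure mode. The paper's Kruskal route avoids the issue entirely: with one global token there is never concurrent fragment activity, so component names are always exact at the moment an edge is tested. If you want to keep Bor\r{u}vka, the cleanest repair in this model is the same trick --- sequentialize the work of \emph{all} fragments behind a single global token on a pre-built spanning tree --- but once you have that token, Kruskal is the simpler algorithm.
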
 

\begin{proof}
Our algorithm is a distributed version of Kruskal's algorithm.
Remember that Kruskal's algorithm sorts the edges by increasing weight, and then adds the edges to the tree one by one, discarding any edge that would close a cycle. 
There are several modifications to perform in order to make this algorithm work in our framework.
 
First, to consider the edges by increasing weights, we work in phases, each phase corresponding to a specific value that a weight can take. 
We have a phase for the possible edges of weight 1, then a phase for those of weight 2, etc. 

Then in order to have these phases somehow synchronized on the whole network, and to avoid simultaneous additions of edges of the same weight (which could create cycles), we use a token. 
This token visits the nodes of the graph, and only the node with the token will be allowed to add edges to the tree. 
The token transports the information about the weight of the current phase. 
To make the token visit all the nodes we need to build a spanning tree beforehand, and we make the token traverse the tree. 
We use the same spanning tree construction and token circulation as in \cite{BlinBD18}, inspired by the tree algorithm of \cite{DattaLV11} and the token algorithm of \cite{PetitV99}. These algorithms ensure, under the distributed unfair scheduler, stabilization to a proper token circulation in $O(n)$ rounds, and they use only $O(\log n)$ bits of memory.

Finally, a node must be able to decide locally whether adding a specific edge would close a cycle or not. 
To do so every node will hold a \emph{component name}. This name is the minimum  identifier in the connected component of the node, in the current state of the tree. 
This way, a node can safely add an edge if its component name is different from the component name of the other endpoint. 
To maintain this name, we will need to perform a traversal of a part of the tree every time we add an edge. This is also known to be doable in our setting in polynomial time and logarithmic memory~\cite{BuiDPV07}.

The space complexity is in $O(s+\log n)$, because we just need to store a constant number of IDs, weights and additional $O(\log n)$-size objects. 
The time complexity in terms of rounds is polynomial. Indeed the number of phases is polynomial, because we consider polynomial weights, and each phase lasts a polynomial number of rounds (because the primitives we use, token circulation and traversal of the tree, are known to be polynomial~\cite{BlinBD18, DattaLV11, PetitV99, BuiDPV07}). 
\end{proof}
 
\paragraph*{Augmented MST}
Actually, for the next steps (described in Section~\ref{sec:labeling-algorithm}), we will need a bit more than the minimum spanning tree. 
We want to have for each node: an orientation to an arbitrary root, the number of nodes that are descendants of this node (including the node itself), and the total number of nodes in the graph. 
These are easy to compute by simple traversals.


\section{Distributed proof of MST}
\label{sec:distributed-proof}

The second part of our algorithm, that makes it self-stabilizing, consists in labeling the nodes with a distributed proof. 
This labeling certifies the correctness of the minimum spanning tree.
It comes from a proof-labeling scheme described in \cite{KormanK07}. It is necessary here to describe the scheme of \cite{KormanK07} with some details (even though it is not our work) in order to allow the reader to understand the next section without reading the full version in \cite{KormanK07}. 
For an intuitive but more in depth description of the main ideas behind the scheme, we refer to~\cite{Feuilloley19}.

\begin{lemma}[\cite{KormanK07}]
There exists a distributed proof of size $O(\log n \cdot s)$ for minimum spanning trees.
\end{lemma}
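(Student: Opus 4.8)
The plan is to certify minimality through the \emph{cycle property}: a spanning tree $T$ is a minimum spanning tree if and only if, for every non-tree edge $e=(u,v)$, the weight $w(e)$ is at least the weight of every tree edge lying on the unique $u$--$v$ path in $T$ (equivalently, $e$ is a heaviest edge of the fundamental cycle it closes). The certificate then splits into two independent tasks: (i) certifying that the marked edges form a spanning tree, and (ii) certifying this path-maximum condition for every non-tree edge.

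For task (i) I would use the classical proof-labeling scheme for spanning trees \cite{KormanKP10}: fix a root, orient every tree edge towards it, and give each node its parent pointer together with its distance to the root. Acyclicity, connectivity, and the fact that exactly $n$ nodes are spanned are then checkable from the local consistency of these distances and from the subtree sizes and total count already provided by the augmented MST of Section~\ref{sec:construction}. This part costs $O(\log n)$ bits per node.

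For task (ii), root $T$ and equip each node $v$ with a logarithmic pointer-jumping table: for every $i=0,\dots,\lceil\log n\rceil$, the identifier of the ancestor $a_i(v)$ at tree-distance $2^i$, the maximum weight $M_i(v)$ of a tree edge on the path from $v$ to $a_i(v)$, and the depth of $v$. Given a non-tree edge $(u,v)$, each endpoint reads the other's table; from the ancestor identifiers and the depths it can locate the least common ancestor of $u$ and $v$, and by combining the relevant $M_i$ values along the two dyadic decompositions it recovers the maximum tree-edge weight on the whole $u$--$v$ path. The endpoint then checks $w(u,v)$ against this maximum. Each table has $O(\log n)$ entries, each an identifier of $O(\log n)$ bits plus a weight of $O(s)$ bits; since the weights are polynomial, $s=\Theta(\log n)$, so the total certificate size is $O(\log n\cdot s)$, as claimed.

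The main obstacle is that these tables refer to \emph{distant} ancestors, which are not neighbors, so a node cannot directly read the labels it would need to confirm that its own table is truthful. I expect the technical heart of the argument to be the design of a purely local consistency check that nonetheless forces the prover to be honest: one verifies the base case $a_0(v)=\mathrm{parent}(v)$ and $M_0(v)=w(v,\mathrm{parent}(v))$ across each tree edge, and then propagates the doubling relations $a_{i+1}(v)=a_i(a_i(v))$ and $M_{i+1}(v)=\max\!\bigl(M_i(v),M_i(a_i(v))\bigr)$ through a telescoping check along the tree, so that any discrepancy is caught at some node. Proving that no fraudulent table can survive every local test is exactly where the correctness of the scheme of \cite{KormanK07} resides, and is the step I would carry out in full detail.
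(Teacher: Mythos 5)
Your starting point---certifying minimality via the cycle property on top of a standard $O(\log n)$-bit spanning-tree certificate---matches the paper, but the binary-lifting scheme you propose for task (ii) has three gaps, two of them fatal as stated. First, the size bound: your tables store $O(\log n)$ ancestor identifiers of $O(\log n)$ bits each, so the label is $\Theta(\log^2 n)$ \emph{regardless of} $s$. You recover $O(\log n\cdot s)$ only by invoking $s=\Theta(\log n)$, but the whole reason the lemma is stated with the parameter $s$ is that it is later applied to \emph{transformed} weights with $s=O(\log\log n)$ or even $O(1)$ (Lemma~\ref{lem:black-box} combined with Section~\ref{sec:labeling-algorithm}); with your scheme the certificate stays at $\Theta(\log^2 n)$ and the trade-off of Theorem~\ref{thm:approximation} collapses. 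Second, even granting truthful tables, the verification you describe cannot be executed: computing the $u$--$v$ path maximum (or the LCA) by binary lifting requires the tables of the \emph{intermediate} ancestors visited by the dyadic decomposition, not just those of $u$ and $v$. A node's view contains only its own label and its neighbors' labels, so the endpoints of a non-tree edge cannot ``combine the relevant $M_i$ values''---those values live at nodes they cannot see. Third, the step you defer---a local test that no fraudulent table survives---is not where the correctness of the scheme of \cite{KormanK07} resides, because that scheme uses no jump pointers at all; and the difficulty is real: the relation $a_{i+1}(v)=a_i(a_i(v))$ involves a node at distance $2^i$ from $v$, while the only locally readable relation, $a_i(\mathrm{parent}(v))$ versus $a_i(v)$, compares two identifiers whose parent-child relationship $v$ has no means to confirm.

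What \cite{KormanK07} actually does, and what the paper sketches in Section~\ref{sec:distributed-proof}, is a center (centroid) decomposition: recursively choose a center whose removal leaves subtrees of at most half the size, and give each node, per level, the maximum tree-edge weight on its path to that level's center ($O(s)$ bits), a subtree number (assigned by decreasing subtree size, so that all levels together cost only $O(\log n)$ bits), and an orientation towards the center encoded relative to a single ID-based orientation ($O(1)$ bits per level). For any non-tree edge, there is a level at which its endpoints lie in different subtrees of a common center; the tree path between them then passes through that center, so its maximum weight is the maximum of the two stored values, and the check uses only the two endpoints' labels. Balance gives $O(\log n)$ levels, hence $O(\log n\cdot s)$ in total, with identifiers never multiplied by the number of levels. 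This is precisely what makes the bound genuinely parametrized by $s$ and the verification purely local---the two properties your construction lacks.
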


\begin{proof}[Sketch of the scheme and of the proof]
The labeling is actually in two parts. 
The first part certifies the acyclicity of the tree. 
It is well known that acyclicity can be certified with $O(\log n)$ bits \cite{KormanKP10}, for example with each node storing its number of descendants, thus we focus on the second part. 

The second part of the scheme has a recursive shape. Let us first describe the top-most level of it. 
The prover chooses a node to be the \emph{center} of the tree, and orients the tree towards this node such that it becomes a root. 
This node has some $x$ subtrees, that is, if we remove this center from the graph, there would be $x$ trees. 
The prover gives a distinct number to each subtree from 1 to $x$. 
Every node (except the center) is labeled with the number of its subtree. Also every node is given the maximum weight on its path to the center along the edges of the tree. 
It is rather easy to check that the correctness of these pieces of information can be checked locally.

To show why this is useful, consider a node~$u$ of a subtree~$A$, that is adjacent in the graph (but not in the tree) to a node~$v$ of a subtree~$B$. 
Thanks to the subtree numbers in their labels, the nodes~$u$ and~$v$ know that they do not belong to the same subtree.
We claim that they can check whether adding $(u,v)$ to the tree could result in a smaller weight tree (which would contradict the fact that the selected edges form an MST). 
First, remember that adding $(u,v)$ would lead to a lighter tree, if and only if, the path from $u$ to $v$ (in the tree) has an edge that is heavier than $(u,v)$ (and thus that could be replaced by $(u,v)$). Therefore, $u$ and $v$ only have to look for such an edge. 
The path from $u$ to $v$ must go through the root, because these nodes are in different subtrees. 
Thus the maximum weight on this path is either the maximum weight from $u$ to the root, or from $v$ to the root. As the nodes are given these maximum weights on the paths to the root in their labels, they can check whether $(u,v)$ is lighter or not than the heaviest edge on the path. 

To allow the nodes to test for all the non-selected edges, we need to handle adjacent nodes that \emph{are} in the same subtree. 
We do so by choosing recursively a center in each subtree, and providing the same information as for the top level. 
That is, every node will have information regarding the first center, its second center (\emph{i.e.} the center of its subtree of the first center), its third center (\emph{i.e.} the center of its subtree of the second center) etc. until the level where it is itself a center.
Thanks to this recursive structure, for every pair of nodes, there is a level of the recursion for which they are assigned to two distinct subtrees (or to a subtree and its root), and the checking can be done in the same way as for the top-most level.

To be sure to use only $O(\log n \cdot s)$ space, for all this information about all the centers of a node, we need the above scheme to have two properties. 
First, we need the centers to be placed in a balanced manner. 
Precisely, we want the center of a (sub)tree $T$ to be at a node such that every subtree has size at most $|T|/2$. 
(Such a node always exists and can be computed in a simple way, as described in the next section.)
This balance for the centers implies that there is at most $O(\log n)$ centers per node.
Second, we need that for each node, the concatenation of all its subtree numbers is not too long. 
To get this, it has been proved (see~\cite{GavoilleKKPP01}) that it is enough that, for each center, the subtrees are numbered by decreasing number of nodes (the largest subtree gets number 1, the second largest gets number 2, etc.).

Also, remember that we need to have an orientation towards the center, for each of the $O(\log n)$ centers a node has. 
This takes $O(\log^2n)$ bits if we use the ID of the parent as a pointer.
Instead, we have only one orientation of the full tree encoded by IDs, and the other orientations are encoded with respect to this original orientation. 
Concretely, every node will store whether the center of the current level is in the direction of its parent in the original orientation, or if it is in the direction of one of its children. 
Surprisingly, this is enough to describe and check each orientation, and it takes only constant number of bits per level. (See \cite{KormanK07} or \cite{Feuilloley19}, to see for example why not knowing the precise child-parent relation is not problematic.)
\end{proof}


\section{Certification labeling algorithm}
\label{sec:labeling-algorithm}

In this section, we describe an algorithm that builds the labeling of Section~\ref{sec:distributed-proof}.
Remember that thanks to Section~\ref{sec:construction}, we start with a tree that has an orientation toward a root and whose nodes are labeled with the number of nodes in their subtrees. Thus the first part of the labeling of Section~\ref{sec:construction} is already present. 

\medskip

\begin{lemma}
Given a tree with an orientation to a root and subtree sizes, there exists an algorithm that builds the labeling of Section~\ref{sec:distributed-proof} in space $O(\log n \cdot s)$ and polynomial time.
\end{lemma}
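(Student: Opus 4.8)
The plan is to build the recursive labeling of Section~\ref{sec:distributed-proof} level by level, following the centroid decomposition on which it is based. I call a node \emph{active} if it has not yet been selected as a center at a previous level; the \emph{pieces} at a given level are then exactly the connected components of the subgraph induced by the active nodes (two adjacent active nodes are always in the same piece, since removing a center only cuts the edges incident to it). Since the decomposition has depth $O(\log n)$ --- this is precisely the balance property of the centers guaranteed in Section~\ref{sec:distributed-proof}, where every subtree has at most half the nodes of its parent piece --- there are only $O(\log n)$ levels to process.

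First I would describe how to process a single piece $P$ at one level. Rooting $P$ at its unique node closest to the original root, I recompute the sizes \emph{inside} $P$ by a bottom-up traversal; I cannot simply reuse the given subtree sizes, since removing the higher-level centers changes them. From these sizes the center $c$ of $P$ is found by the standard walk: start at the root and repeatedly move to the child whose subtree contains more than $|P|/2$ nodes, stopping when no such child exists. Then three traversals produce the labeling attached to this level: (i) an outward traversal from $c$ assigns to each node the maximum edge weight on its tree-path to $c$; (ii) the subtrees hanging from $c$ are ranked by decreasing size and each node is told the rank of its subtree, which is its subtree number; and (iii) each node compares the direction of $c$ with the direction of its original parent and records the constant-size orientation bit described in Section~\ref{sec:distributed-proof}. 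All of these are traversals and token circulations over $P$, which are known to run in polynomial time and $O(\log n)$ space under the distributed unfair scheduler~\cite{BuiDPV07, BlinBD18}.

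For the space bound I would argue that the working memory per node is only $O(\log n + s)$ (a constant number of identifiers, piece-sizes and a single weight, plus the token contents), while the labeling being written is dominated by the $O(\log n)$ stored maximum weights, giving $O(\log n \cdot s)$; the concatenation of all subtree numbers stays within $O(\log n)$ bits precisely because they are produced in decreasing order of subtree size~\cite{GavoilleKKPP01}, and the orientation bits contribute only $O(\log n)$ in total. For the time bound, each level is a bounded number of traversals of the whole forest of pieces, hence polynomial, and there are $O(\log n)$ levels.

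The hard part will be the bookkeeping of the recursion in small space, namely re-rooting each piece at its newly found center and recomputing the piece-internal sizes at every level without ever storing any global information. In particular the ``upward'' piece --- the component containing the original root after a center is removed --- is rooted and sized differently from the way it was in the original tree, so the size recomputation must genuinely be redone at each level rather than inherited; handling this uniformly (using the original orientation only to define the pieces and the orientation bits, and recomputing everything else by local traversals) is what makes the construction fit in $O(\log n \cdot s)$ space and polynomial time.
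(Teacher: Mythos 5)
Your proposal is correct and follows essentially the same route as the paper: a top-down centroid decomposition of depth $O(\log n)$, where each level is labeled by traversals that compute the maximum weight to the center, the decreasing-size subtree ranks, and the constant-size orientation bits relative to the original orientation, with the same space accounting ($O(\log n)$ stored weights dominating, giving $O(\log n \cdot s)$). The only differences are implementation details: the paper finds each center by moving the root step by step while updating subtree sizes incrementally (so piece sizes never need to be recomputed from scratch, which is the bookkeeping you flag as the hard part), and it makes explicit that the center must announce the (identifier, rank) pairs one at a time with confirmation --- since the full ranking of up to $n-1$ subtrees cannot fit in its state --- which is the token-circulation mechanism your step (ii) leaves implicit.
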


\begin{proof}
We first describe the algorithm, and then highlight some key properties. 
The algorithm takes as input a tree, and certifies it as a minimum spanning tree. 
The construction follows the recursive description of the labeling of Section~\ref{sec:distributed-proof}. 

Before any computation, we do a copy of the pointers, subtree number, etc. We will modify the copies, but we need to keep the original labels.
Let us first describe the computation of a center. 
(This description is for the first center, we explain later how to adapt it to the other phases of the labeling.) 
Basically, we will move the root of the tree until it satisfies the property of a proper center. (Remember that this property is that the center separates the tree into parts that have size at most half the size of the full tree.)
Our first candidate for a center is then the root of the tree. 
Thanks to the subtree sizes that every node holds, the root can detect whether it is in a central position or not.
Indeed, as the subtree size of the root is the number of nodes in the full tree, the root can easily check whether its neighbors have subtree sizes at most half this number.
If the root is not in a central position, then we transfer the root to the neighbor having the largest subtree size. 
This transfer of the root implies several computations:

\begin{itemize}
\item the old root designates the new root
\item the old root orients its pointer to the new root
\item the new root, erase its pointer and takes the root label
\item the old root takes as new subtree size its old  subtree size, minus the old subtree size of the new root
\item the new root takes as subtree size the old subtree size of the old root.
\end{itemize} 

Thanks to this step, we are in the same position as before (that is, we have a tree with a correct orientation, and correct subtree sizes), but in addition, the root is in a more central position, in the sense that the largest subtree size is smaller than before the transfer. 
After $O(n)$ such moves, the root is in a central position.  

Once the center is validated, we have to label each node with: the orientation to the center, the maximum weight on the path to the center, and the subtree number. 
There is a difficulty here. 
Remember that only the center can decide which subtree gets which number, because these numbers depend on the relative sizes of all the subtrees.
Thus the center has to announce the subtree numbers \emph{e.g.} ``the subtree with root $ID_v$ has number 3''. 
It is not possible to announce all these numbers at once. 
Indeed, the list of these numbers can have length of order $n$, therefore we cannot write the whole list in the state of the center.
Instead the center will announce a first pair identifier-number, then wait for the node with this identifier to confirm this information, and then go to the second etc. 
Once the root of a subtree receives its subtree number, it broadcasts this information to all its subtree using a snap-stabilizing (\emph{i.e.,} self-stabilizing with a stabilization time of 0 rounds) Propagation of Information with Feedback algorithm. Bui \emph{et al.}~\cite{BuiDPV07} provide such an algorithm with constant space requirement and polynomial completion time (in rounds) on trees, that meets our requirements. 
In the same wave, the maximum weight to the center is  computed by keeping and updating the maximum weight seen so far, while descending in the tree. 
The orientation is even easier to store: just copy the current orientation. 

Once the broadcast waves have come back to the root of the subtree, every node has, in its label, all the information it needs to store for this phase. 
And then we can start immediately a new recursive call, looking for the next center. 
Indeed we have a tree, with a root, with a proper orientation, and with correct subtree numbers. 

Now, let us highlight some key points of the algorithm. 

\begin{itemize}
\item Once a center has finished launching the computations in each subtree, it becomes silent, and will not change its state, except if there is a reset.
\item As soon as two adjacent nodes (in the graph) become centers, they can start checking their labelings to see if the distributed proof makes sense, at least for this edge (and launch a reset if it is not the case). 
\item Once the root has launched the computation in the subtrees, these subtrees will run independent computations (except for the cycle checking mentioned in previous item). 
This means that the scheduler can delay the computation in one subtree, by making a series of recursive calls in the other subtrees, but at some point these recursive calls will end up by having all nodes as centers, thus disabled, and the scheduler will have to enable the remaining nodes.
\item Suppose that we are in a subtree, looking for the center for the second recursive call. 
The main center is already  chosen, and thanks to the pieces of information stored, the nodes can check that  it exists. 
But they cannot check whether it was placed in a central position, because we are reusing the subtree size variables.
Thus if we start from an initial configuration that corresponds to this situation, we are not following the correct construction described in Section~\ref{sec:distributed-proof}. 
This means that we could end up with a larger memory than what we claimed.
What we do is that we control the size used. 
If we end up using more than $\alpha \cdot \log n \cdot s$ bits, for a constant $\alpha$ large enough to allow correct computations, then we launch a reset.
Note that it may actually be the case that the centers are not perfectly placed, but that their positions are good enough to ensure that the memory size does not cross the $\alpha \cdot \log n \cdot s$ limit; in this case we do not detect it, and the outcome is correct. 
\end{itemize}

When the computations ends on all nodes, every node is labeled as specified in Section~\ref{sec:distributed-proof}, and the algorithm becomes silent.  
\end{proof}


\section{Conclusion}

This paper presents the first self-stabilizing algorithm using approximation to reduce memory usage. A step in this construction is the design of a polynomial-time silent self-stabilizing algorithm for MST construction using $O(\log n \cdot s)$ bits of memory. This later algorithm uses an unusual two-phase approach: building and then certifying the solution. We believe that this modular approach is the key to go down to complexity $O(\log n \cdot s)$, and we think that it is an interesting problem to formally prove this intuition.\footnote{Some elements indicating that at least the classic techniques cannot avoid a modular approach can be be found in \cite{Feuilloley20}}

\DeclareUrlCommand{\Doi}{\urlstyle{same}}
\renewcommand{\doi}[1]{\href{https://doi.org/#1}{\footnotesize\sf doi:\Doi{#1}}}

\bibliographystyle{plainnat}
\bibliography{biblio-approx-MST}

\end{document}